\documentclass[11pt, a4paper]{article}

\usepackage[margin=1in]{geometry}
\usepackage{charter}
\usepackage{extarrows}
\usepackage{amsthm}
\allowdisplaybreaks
\usepackage{amssymb}
\usepackage{enumitem}
\usepackage[linktoc=page]{hyperref}
\hypersetup{
    colorlinks,
    linkcolor={red},
    citecolor={blue},
    urlcolor={OliveGreen}
}

\newtheorem{lemma}{Lemma}
\newtheorem{theorem}{Theorem}

\title{Optimal Budget-Feasible Mechanisms for Additive Valuations\footnote{Work
		done in part when the second and the last authors were research assistants at ITCS, Shanghai University of Finance
		and Economics.}}
\author{
Nick Gravin\thanks{ITCS, Shanghai University of Finance and Economics. {\tt
nikolai@mail.shufe.edu.cn}.} \and
Yaonan Jin\thanks{Department of Computer Science, Columbia University. {\tt jin.yaonan@columbia.edu}.} \and
Pinyan Lu\thanks{ITCS, Shanghai University of Finance and Economics. {\tt
lu.pinyan@mail.shufe.edu.cn}.} \and
{Chenhao Zhang}\thanks{Department of EECS, Northwestern University. {\tt chenhao.zhang.rea@u.northwestern.edu}}
}

\date{}

\newcommand{\AutoAdjust}[3]{\mathchoice{ \left #1 #2  \right #3}{#1 #2 #3}{#1 #2 #3}{#1 #2 #3} }
\newcommand{\Xcomment}[1]{{}}

\newcommand{\InBrackets}[1]{\AutoAdjust{[}{#1}{]}}

\newcommand{\Prx}[2][]{\operatorname{Pr}_{#1}\InBrackets{#2}}


\newcommand{\be}{\begin{equation}}
\newcommand{\ee}{\end{equation}}

\newcommand{\eqdef}{\stackrel{\textrm{\tiny def}}{=}}

\newcommand{\R}{\mathbb{R}}

\newcommand{\bid}{b}
\newcommand{\bids}{{\mathbf \bid}}
\newcommand{\bidsmi}{{\mathbf \bid}_{-i}}
\newcommand{\bidi}[1][i]{{\bid_{#1}}}

\newcommand{\cost}{c}
\newcommand{\costs}{{\mathbf \cost}}

\newcommand{\costi}[1][i]{{\cost_{#1}}}

\newcommand{\val}{v}
\newcommand{\vals}{{\mathbf \val}}

\newcommand{\vali}[1][i]{{\val_{#1}}}

\newcommand{\util}{u}

\newcommand{\utili}[1][i]{\util_{#1}}

\newcommand{\price}{p}
\newcommand{\prices}{{\mathbf \price}}

\newcommand{\alloc}{x}
\newcommand{\allocs}{{\mathbf \alloc}}

\newcommand{\alloci}[1][i]{{\alloc_{#1}}}


\newcommand{\opt}{\mathsf{opt}}
\newcommand{\fopt}{\mathsf{fopt}}
\newcommand{\alg}{\mathsf{alg}}

\begin{document}
\maketitle

\begin{abstract}
In this paper, we show a tight approximation guarantee for budget-feasible mechanisms with an additive buyer. We propose a new simple randomized mechanism with approximation ratio of $2$, improving the previous best known result of $3$. Our bound is tight with respect to either the optimal offline benchmark, or its fractional relaxation. We also present a simple deterministic mechanism with the tight approximation guarantee of $3$ against the fractional optimum, improving the best known result of $(2+ \sqrt{2})$ for the weaker integral benchmark.
\end{abstract}

\maketitle

\section{Introduction}
\label{sec:intro}
In a typical procurement setting, a buyer wants to purchase items from a set $A$ of agents. Each agent $i \in A$ can supply an item (or provide a service) at an incurred cost of $c_i$ to himself, and the buyer wants to optimize his valuation for the set of acquired items taking into account the costs of items. Because the agents may strategically report their costs, this setting is usually considered as a truthful mechanism design problem.

These problems have been extensively studied by the AGT community. The earlier work analyzed the case where the buyer's valuation takes $0$-$1$ values (see, e.g., \cite{ArcherT02}) in the frugality framework, with the objective of payment minimization. A more recent line of work on the budget-feasible mechanism design (see, e.g., \cite{S10}) studies more general valuation functions with a budget constraint of $B$ on the buyer's total payment. Our work belongs to the latter category.

Research in the budget-feasible framework focuses on different classes of complement-free valuations (ranging from the class of additive valuations to the most general class of subadditive valuations), and has many applications such as procurement in crowdsourcing markets~\cite{SM13}, experimental design~\cite{HIM14}, and advertising in social networks~\cite{S12}. 
The central problem for these online labor markets is to properly price each task. The budget feasibility mechanism design model is a very reasonable model that naturally captures the budget limitation on the buyer and also uncertainty about workers costs.

This setting corresponds to the most basic additive valuation of the buyer, which is the topic of our paper.
I.e., we assume that every hired worker $i \in W$ generates a value of $v_i \ge 0$ to the buyer, whose total valuation from all the hired workers $W$ is equal to $v(W) = \sum_{i \in W} v_i$. Without any incentive constraints, this naturally defines the {\sf Knapsack} optimization problem:
\begin{center}
    Find workers $S \subseteq A$: \quad\quad $\max_{S \subseteq A} v(S) = \sum_{i \in S} v_i$, subject to $\sum_{i \in S} c_i \leq B$.
\end{center}
In the budget-feasible framework, the goal is to design truthful direct-revelation mechanisms\footnote{Typically, there are no assumptions in the literature about the prior distribution of the agents' costs. The truthfulness condition means that the strategy of reporting the true cost is ex post a dominant and individually rational strategy for every single agent.} that decide (1)~which workers $W \subseteq A$ to select and (2)~how much to pay them under the budget constraint. A mechanism is evaluated against the benchmark of the optimal solution to the {\sf Knapsack} problem. Over all possible choices of the value $v_{i}$'s and the cost $c_{i}$'s, the worst-case multiplicative gap between the outcome $v(W)$ and the optimal {\sf Knapsack} solution is called the {\em approximation ratio} of this mechanism.

For the above problem with an additive buyer, Singer~\cite{S10} gave the first $5$-approximation mechanism. Later, the result was improved by Chen et al.~\cite{CGL11} via a $(2+ \sqrt{2})$-approximation deterministic mechanism and a $3$-approximation randomized mechanism, which still remain the best known upper bounds for the problem for nearly a decade. Further, the best known lower bounds are $(\sqrt{2} + 1)$ for the deterministic mechanisms and $2$ for the randomized ones~\cite{CGL11}. Thus, there are gaps for both the deterministic mechanisms $[\sqrt{2} + 1, 2+ \sqrt{2}]$ and the randomized ones $[2, 3]$. Since these two intervals intersect, it is even unclear whether the best randomized mechanism is indeed better than the best deterministic one.

Also for the above problem with an additive buyer, Anari et al.~\cite{AGN14} studied an important special case of large markets (i.e., the setting where each worker has vanishingly small cost compared to the buyer's budget) and acquired the tight bound of $\frac{e}{e - 1}$.

\vspace{.1in}
\noindent
{\bf Fractional Knapsack.}
Interestingly, all previous work on budget-feasible mechanisms for an additive buyer actually obtained results against the stronger benchmark of the optimal solution to \textsf{Fractional Knapsack}, i.e., the fractional relaxation of the \textsf{Knapsack} problem. (Nonetheless, the lower bounds apply to the \textsf{Knapsack} benchmark instead of the \textsf{Fractional Knapsack} benchmark.) Indeed, although \textsf{Knapsack} is a well-known NP-hard problem, its fractional relaxation admits an efficient solution by a simple greedy algorithm, and generally has much better behavior than the integral optimum. We also compare the performance of our mechanisms to the \textsf{Fractional Knapsack} benchmark.

\vspace{.1in}
\noindent
{\bf Our Results.}
We propose two natural mechanisms that both achieve tight guarantees against the \textsf{Fractional Knapsack} benchmark. Namely, we prove a $3$-approximation guarantee for a deterministic mechanism and a $2$-approximation guarantee for a randomized one. Given the matching lower bound of $2$ even against the weaker \textsf{Knapsack} benchmark, the guarantee from our randomized mechanism is also tight against the standard benchmark. Our results establish a clear separation between the respective power of randomized and deterministic mechanisms: no deterministic mechanism has an approximation guarantee better than $(\sqrt{2} + 1)$, whereas our randomized mechanism already achieves a $2$-approximation.

Concretely, we propose a new natural design principle of two-stage mechanisms. In the first stage, we greedily exclude the items with low {\em value-per-cost} ratios.\footnote{This is essentially the main approach used in the previous work, had we continued until the remaining items (as a whole) become budget-feasible.} Then in the second stage, we leverage the simple {\em posted-price} schemes, based on the values of the remaining items. Both of our randomized and deterministic mechanisms share the first stage, which stops earlier than its analogues from the previous work. A remarkable property of the first stage, which we call {\em pruning} (similar to the pruning approach in the frugality literature~\cite{ChenEGP10,KempeSM10}) is that, it can be {\em composed} (in the sense of~\cite{AggarwalH06}) with any truthful follow-up mechanism that runs on the items left to the second stage. The difference between our randomized and deterministic mechanisms lies in the follow-up posted-price schemes -- the randomized mechanism uses non-adaptive posted prices with the total sum below the budget, whereas our deterministic mechanism employs adaptive pricing that depends on whether the previous agents accepted or rejected their posted-price offers.

\vspace{.1in}
\noindent
{\bf Intuition behind our mechanism.}
The pruning stage of both mechanisms allows the buyer to reduce the choice complexity, and gives a reasonable upper bound on the payment to each remaining agent. The value of the fractional optimum never decreases too much, especially when the individual true cost $c_{i}$ of each remaining agent is a non-negligible fraction of the budget $B$. We prove that the fractional optimum drops at most by a factor of two after the pruning stage for an arbitrary set of values and costs.

The idea behind the pruning stage is that the removed agents can be safely ignored by the mechanism, since the remaining items suffice to get the desired approximation to the fractional optimum. Moreover, the mechanism should naturally prefer the items with higher value-per-cost ratios. Our pruning process is based on the value-per-cost ratio, and works specifically for an additive-valuation buyer. That is, it is still unknown how to extend such a pruning stage to more general classes of valuation functions.

The second stage of our randomized mechanism draws a random vector of budget-feasible posted prices. This is the same type of the mechanism as was used by Bei et al.~\cite{BCGL17} to establish the tight approximation ratio of $2$ for a subadditive buyer in the promise version of the problem (i.e., where the buyer is ensured to have a budget higher than the total cost of all items). Their result holds in the Bayesian setting, which by the {\em minimax principle} implies the existence of a randomized posted-price mechanism with the same approximation ratio in the worst-case setting. In our problem with an additive buyer, we explicitly construct a desired distribution over the posted-price vectors. Such posted-price schemes seem to be useful and easily adaptable to more general classes of valuation functions.

\subsection{Related Work}
\label{sec:related}
A complementary concept of budget-feasible mechanism design is {\em frugality}, for which the objective is payment minimization under the feasibility constraint on the set of winning agents. In that framework, there is a rich literature studying different systems of feasible sets, including matroid set systems~\cite{KarlinKT05}, path and $k$-paths auctions~\cite{ArcherT02,Talwar03,ElkindSS04,ChenK07,ChenEGP10}, vertex cover and $k$-vertex cover~\cite{ElkindGG07a,KempeSM10,HajiaghayiKS18}.

The framework of budget-feasible mechanism design was proposed by Singer~\cite{S10}. Beyond additive valuations, other more general classes of complement-free valuations also have been considered in the literature:
\begin{center}
    submodular \quad $\subset$ \quad fractionally subadditive \quad $\subset$ \quad subadditive.
\end{center}
Singer gave an $112$-approximation mechanism for submodular valuations~\cite{S10}. This bound was improved to $7.91$ and $8.34$ respectively for the randomized and deterministic mechanisms by Chen et al.~\cite{CGL11}, and then to $4$ and $5$ by Jalaly and Tardos in~\cite{KT18}. For fractionally subadditive valuations, Bei et al.~\cite{BCGL17} gave a $768$-approximation randomized mechanism. For subadditive valuations, Dobzinski et al.~\cite{DPS11} first gave an $\mathcal{O}(\log^2 n)$-approximation randomized mechanism and an $\mathcal{O}(\log^3 n)$-approximation deterministic mechanism. Later, Bei et al.~\cite{BCGL17} showed the existence of an $\mathcal{O}(1)$-approximation mechanism in this most general setting. Nonetheless, an explicit description of such a mechanism is still unknown.

There also have been many interesting and practically motivated adjustments to the original budget feasibility model. In particular, Anari et al.~\cite{AGN14} investigated the variant with the additional large market assumption (namely, every agent has a negligible cost compared to the whole budget) and attained the tight result of $\frac{e}{e-1}$ for an additive buyer. Leonardi et al.~\cite{LMSZ17} explored an additive-valuation model where the winning agents must form an independent set from a matroid. Amanatidis et al.~\cite{ABM16,ABM17} investigated the variants with several important subclasses of submodular and fractionally subadditive valuations. Badanidiyuru et al.~\cite{BadanidiyuruKS12} studied the family of online pricing mechanisms in the budget feasibility model, motivated by practical restrictions given by the existing platforms. Balkanski and Hartline~\cite{BalkanskiH16} obtained improved guarantees in the Bayesian framework. Goel et al.~\cite{GoelNS14} concerned more complex scenarios on a crowdsourcing platform, where the buyer hires the workers to complete more than one task. Balkanski and Singer~\cite{BalkanskiS15} considered fair mechanisms (instead of truthful mechanisms) in the budget feasibility model.

\section{Preliminaries}
\label{sec:prelim}
In the procurement auction, there are $n$ items for sale, each held by a single agent $i \in [n]$ with a privately known cost $\costi \geq 0$ and a publicly known value $v_i > 0$ for the buyer. The buyer has an {\em additive} valuation function $v(A)= \sum_{i \in A} v_i$ for purchasing a subset $A \subseteq [n]$ of items. Due to the revelation principle, we only consider direct-revelation mechanisms. Upon receiving {\em bids} $\bids = (\bidi)_{i = 1}^n$ of the claimed costs from the agents, a mechanism determines a set $W \subseteq [n]$ of {\em winning} agents and the {\em payments} $\prices = (p_i)_{i = 1}^n$ to the agents.

In the budget feasibility model, a {\em deterministic} mechanism $\mathcal{M}$ is specified by an allocation function $\allocs(\bids): \R_{+}^{n} \to \{0, 1\}^{n}$ (thus the winning set $W \eqdef \{i \in [n] ~ | ~ \alloci(\bids) = 1\}$) and a payment function $\prices(\bids): \R_{+}^{n} \to \R_{+}^n$. We use the notation $\bidi$ to denote the $i$-th entry of the bid vector $\bids$, and the notation $\bidsmi$ the bid vector without bidder $i \in [n]$. We are interested in those truthful mechanisms that satisfy the following properties for any $\bids = (\bidi)_{i = 1}^n$ and any $\costs = (\costi)_{i = 1}^n$.
\begin{itemize}
    
    \item {\em Individual rationality}: $p_i(\bids) \geq c_i$ and thus $\utili(\bids) = p_i(\bids) - \costi \ge 0$ for every $i \in W$, while $p_i(\bids) \geq 0$ and thus $\utili(\bids) = p_i(\bids) \geq 0$ for every $i \notin W$. Namely, every agent $i \in [n]$ gets a non-negative utility.
    
    \item {\em Budget feasibility}: the total payment $\sum_{i \in W} p_i(\bids)$ is capped with a given budget $B \in \R_{+}$.
    
    \item {\em Truthfulness}: every agent $i \in [n]$ maximizes his utility when he bids the true cost $b_i = c_i$, namely $\utili(\costi, \bidsmi) \geq \utili(\bidi, \bidsmi)$ for any $c_i$ and any $\bids = (\bidi, \bidsmi)$.
\end{itemize}
It is well known (see~\cite{M81}) that truthfulness holds if and only if: (1)~the allocation function $\alloci(\bidi, \bidsmi)$ is monotone in bid $b_i$, i.e., each winning agent $i \in W$ keeps winning when he unilaterally claims a lower bid $b_i \le c_i$; and (2)~the payment $p_i(\bids)$ to each winning agent $i \in W$ is the threshold/maximum bid for him to keep winning, i.e., $p_i(\bids) = \sup \{\bidi \in \R_+ ~ | ~ \alloci(\bidi, \bidsmi) = 1\}$.

In general, a mechanism can have {\em randomized} allocation and payment rules. We restrict our attention to the mechanisms that can be described as a probability distributions over truthful deterministic mechanisms. Namely, any realization of such a randomized mechanism is some deterministic truthful mechanism that satisfies the above properties. A randomized mechanism of this type is called a {\em universally truthful} mechanism. We notice that most of the previous work on budget feasible mechanism only studies universally truthful mechanisms.

We denote by $\alg$ the value $\sum_{i \in W} v_i$ derived from a deterministic mechanism, or the expected value $\mathbb{E}\big[\sum_{i \in W} v_i\big]$ in case of a randomized mechanism. W.l.o.g., we assume $c_i \le B$ for each agent $i \in [n]$, since this agent cannot win when $c_i > B$ (due to the individual rationality and the budget feasibility constraints). If the buyer knows the private costs $\costs = (\costi)_{i = 1}^n$ of the items, he would simply select the subset of items with the maximum total value, under the budget constraint. Let $\opt$ denote the optimal solution to this \textsf{Knapsack} problem:
\[
\tag{\textsf{Knapsack}}
\opt \eqdef \max_{(x_i)_{i = 1}^n \in \{0, 1\}^n} \sum_{i = 1}^n x_i \cdot v_i,
\quad\mbox{ subject to } \sum_{i = 1}^n x_i \cdot c_i \leq B. \quad\quad\quad\quad
\]
We also consider the fractional relaxation of the problem, and define its optimum as
\[
\tag{\textsf{Fractional Knapsack}}
\fopt \eqdef \max_{(x_i)_{i = 1}^n \in [0, 1]^n} \sum_{i = 1}^n x_i \cdot v_i,
\quad \mbox{ subject to } \sum_{i = 1}^n x_i \cdot c_i \leq B.
\]
Although $\opt$ is NP-hard to calculate, finding $\fopt$ is easy: one greedily and divisibly takes the items in the decreasing order of their value-per-cost ratios,\footnote{Namely, the decreasing order $(\sigma_i)_{i = 1}^n$ is a permutation of $[n]$ such that $\frac{v_{\sigma_1}}{c_{\sigma_1}} \geq \frac{v_{\sigma_2}}{c_{\sigma_2}} \geq \cdots \geq \frac{v_{\sigma_n}}{c_{\sigma_n}}$.} until the budget is exhausted or no item is left. Under our assumption that $c_i\le B$ for all $i \in [n]$, we have $1 \le \frac{\fopt}{\opt} \le 2$.\footnote{Without this assumption, the gap between the two optima $\frac{\fopt}{\opt}$ can be arbitrary large.}

We say that a mechanism achieves an $\alpha$-approximation against the benchmark $\opt$, if under whatever values $\vals = (\vali)_{i = 1}^{n}$ and costs $\costs = (\costi)_{i = 1}^{n}$, the outcome value $\alg$
is at least an $\frac{1}{\alpha}$-fraction of the \textsf{Knapsack} solution $\opt$. In what follows, we usually evaluate a mechanism against the stronger benchmark $\fopt$, i.e., the solution to the \textsf{Fractional Knapsack} problem.
\[
    \alpha \le \max_{\vals, \costs, B} \frac{\opt}{\alg}
    \quad \Leftarrow \quad
    \alpha \le \max_{\vals, \costs, B} \frac{\fopt}{\alg}.
\]

\section{Composition of Mechanisms: Pruning}
\label{sec:pruning}
Every mechanism presented in this work can be described as a composition of two stages. In particular, all of our mechanisms share the same first stage, called {\sc Pruning-Mechanism}, which serves to exclude the items with low value-per-cost ratios.

\begin{figure}[h]
    \centering
    \fbox{\begin{minipage}{0.79\textwidth}
    \underline{\sc Pruning-Mechanism}
    \vspace{-.5em}
    {\tt \begin{enumerate}[itemindent=-0.5em]
        \item Let $r \eqdef \frac{1}{B} \cdot \max \{v_i ~ | ~ i \in [n]\}$ and $S(r) \eqdef \{i \in [n] ~ | ~ \frac{v_i}{c_i} \geq r\}$
        
        \item While $r B < v(S(r)) - \max \{v_i ~ | ~ i \in S(r)\}$ do:
        \begin{enumerate}[itemindent=-0.5em]
            \item Continuously increase ratio $r$
            
            \item If $\frac{v_k}{c_k} \le r$, then discard\footnote{If there are multiple such items, we discard them {\em one by one} in lexicographical order, and stop discarding items once the {\tt While-Loop} meets the {\tt Stop-Condition}.} item $k$: $S(r)\gets S(r)\setminus \{k\}$
        \end{enumerate}
        
        \item Return pair $(r, S(r))$
    \end{enumerate}
    }
    \end{minipage}}
    \caption{The first stage, {\sc Pruning-Mechanism}, shared by all of our mechanisms.}
    \label{fig:pruning}
\end{figure}

\noindent
Noticeably, the set $S(r)$ is always nonempty, since the {\tt Stop-Condition} of the {\tt While-Loop} is violated when $S(r)$ contains only one item.

{\sc Pruning-Mechanism} possesses a remarkable {\em composability property}: the combination of it with any truthful follow-up mechanism $\mathcal{M}$ running on the remaining items $i \in S(r)$ is still a truthful mechanism. More concretely, the composition mechanism $\overline{\mathcal{M}} = (\overline{\allocs}, \overline{\prices})$ of {\sc Pruning-Mechanism} with a follow-up mechanism $\mathcal{M} = (\allocs,\prices)$ works as follows:

\begin{figure}[h]
    \centering
    \fbox{\begin{minipage}{0.93\textwidth}
    \underline{\sc Mechanism-Template}
    \vspace{-.5em}
    {\tt \begin{enumerate}[itemindent=-0.5em]
        \setcounter{enumi}{-1}
        \item Receive the pair $(r, S(r))$ from {\sc Pruning-Mechanism}
        
        \item Run mechanism $\mathcal{M}$ on the set $S(r)$:
        \begin{enumerate}[label = (\alph*), itemindent=-0.5em]
            \item Select the winning set $\overline{W}$ from $S(r)$ according to $\mathcal{M}$
            
            \item Cap the payment with $\frac{v_i}{r}$, i.e., $\overline{p}_i \eqdef \min \{p_i, ~ \frac{v_i}{r}\}$, for each $i \in \overline{W}$
        \end{enumerate}
    \end{enumerate}}
    \end{minipage}}
    \caption{The template of a composition mechanism $\overline{\mathcal{M}}$.}
    \label{fig:template}
\end{figure}

\begin{lemma}[Composability]
\label{lem:append}
If a follow-up mechanism $\mathcal{M}$ is individually rational, budget-feasible, and truthful, then so is the composition mechanism $\overline{\mathcal{M}}$.
\end{lemma}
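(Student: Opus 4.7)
The plan is to verify the three required properties of $\overline{\mathcal{M}}$ in turn, reducing each to the corresponding property of $\mathcal{M}$ via the structure of the pruning step. Budget feasibility is essentially immediate: capping payments at $v_i/r$ only decreases them, so $\sum_{i \in \overline{W}} \overline{p}_i \le \sum_{i \in \overline{W}} p_i \le B$. For individual rationality at truthful bids $b = c$, I will observe that any winner $i \in \overline{W}$ lies in $S(r)$, so $v_i/c_i \ge r$ and hence $v_i/r \ge c_i$; together with $p_i \ge c_i$ from $\mathcal{M}$'s IR, this yields $\overline{p}_i = \min\{p_i, v_i/r\} \ge c_i$.

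The substantive part is truthfulness, which I will route through Myerson's characterization (monotone allocation plus threshold pricing). The central technical step is a monotonicity lemma for the pruning itself: if \textsc{Pruning-Mechanism} at bids $b$ returns a pair $(r, S(r))$ with $i \in S(r)$, then for any $b'_i \le b_i$ it returns exactly the same pair at bids $(b'_i, b_{-i})$. To prove this, I will note that throughout the interval of ratios traversed by the algorithm at $b$, the level set $\{j : v_j/b_j \ge r'\}$ is unaffected by replacing $b_i$ with $b'_i$: the ratios of $j \ne i$ are unchanged, and $i$'s ratio is already high enough to keep him in both sets, since $v_i/b'_i \ge v_i/b_i \ge r$. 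Because the algorithm's entire execution — the sequence of discarded agents (with lex tie-breaks), the continuous increase of $r'$, and the stopping condition — depends only on these level sets, the trajectories coincide step by step under the two bid profiles and halt at the same pair.

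Given this lemma, monotonicity of $\overline{\mathcal{M}}$'s allocation is a direct corollary: if $i \in \overline{W}(b)$, then $i \in S(r)$, and $S(r)$ is preserved when $i$ lowers his bid, so the restricted instance fed to $\mathcal{M}$ differs only in $i$'s own bid; monotonicity of $\mathcal{M}$ then keeps $i$ a winner. For the threshold-payment identity, I will trace the two ways $i$ can stop winning as his bid rises: either he fails $\mathcal{M}$'s threshold $p_i$ on the fixed set $S(r)$, or he drops out of $S(r)$ when $v_i/b_i$ falls below $r$, i.e., when $b_i$ first exceeds $v_i/r$. Since $r$ is constant on the whole interval of $b_i$'s on which $i$ remains in $S(r)$, the threshold bid for winning in $\overline{\mathcal{M}}$ equals $\min\{p_i, v_i/r\}$, which matches the definition of $\overline{p}_i$; Myerson's lemma then delivers truthfulness.

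The main obstacle is the pruning-monotonicity lemma, and the delicate point inside it is that the lexicographic tie-breaking (invoked when several agents hit $r'$ simultaneously) must be exercised identically under $b$ and $(b'_i, b_{-i})$. I will handle this by checking that every tied-discard event encountered before halting depends only on the unchanged data for $j \ne i$ — agent $i$'s ratio is strictly above the current $r'$ throughout the relevant range and never participates in these tie events — so the two trajectories really are identical, step by step, rather than only in final output.
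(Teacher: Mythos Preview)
Your proposal is correct and follows essentially the same route as the paper: both arguments establish that the output $(r,S(r))$ of \textsc{Pruning-Mechanism} is unchanged whenever a surviving agent $i$ varies his bid while remaining in $S(r)$, then invoke Myerson's characterization to get monotonicity of $\overline{\mathcal{M}}$ and identify the threshold as $\min\{p_i,\,v_i/r\}$. The only cosmetic difference is that you state the invariance lemma one-directionally (for $b'_i \le b_i$) whereas the paper states it for arbitrary $c'_i$; your version suffices once you apply it from the higher of the two bids, which is implicit in your threshold discussion.
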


\begin{proof}
By {\tt Step~(1b)} of {\sc Pruning-Mechanism}, every item $i \in S(r)$ has a value-per-cost ratio at least $r$, which means $c_i \leq \frac{v_i}{r}$. Thus, capping the payment with $\frac{v_i}{r}$ does not break the individual rationality. The follow-up mechanism $\mathcal{M}$ itself is budget-feasible, and the composition mechanism $\overline{\mathcal{M}}$ can only reduce the payment for a winning item. Given these, we are left to show the truthfulness of $\overline{\mathcal{M}}$.


We claim that no winning item $i \in S(r)$ may change the output of {\sc Pruning-Mechanism} by manipulating its bid to $c'_{i}$, unless this item gets excluded from $S(r)$ because of a too high bid $c'_{i}$. Indeed, suppose item $i$ is still winning with the bid $c'_{i}$, then item $i$ was never removed from the set $S(r)$, i.e., $\frac{v_i}{c'_i} \ge r$ at all times in the {\tt While-Loop} of the {\sc Pruning-Mechanism}. Given that item $i$ stays in the set $S(r)$, the {\tt Stop-Condition} of the {\tt While-Loop} and the order in which we discard other item do not depend on the exact bid $c'_{i}$ of item $i$.

Since the follow-up mechanism $\mathcal{M}$ has a {\em monotone} allocation rule, so does the composition mechanism $\overline{\mathcal{M}}$. Regarding a losing item $i \notin \overline{W}$ (i.e., item $i$ loses in $\overline{\mathcal{M}}$ when it bids truthfully), reporting a higher bid $c'_{i} > c_{i}$ does not help this item to pass the {\sc Pruning-Mechanism} stage. As we discussed above, suppose that item $i$ passes the {\sc Pruning-Mechanism} stage by bidding $c'_{i} > c_{i}$, namely $i \in S'(r')$, then the two outcomes of {\sc Pruning-Mechanism} under the two bids $c'_{i}$ and $c_{i}$ must be the same, namely $(r', S'(r')) = (r, S(r))$. In other words, when item $i$ reports the true cost $c_{i}$, it passes the {\sc Pruning-Mechanism} stage as well, but then loses in the follow-up mechanism $\mathcal{M}$. Given that the follow-up mechanism $\mathcal{M}$ is truthful and runs on the same pair $(r', S'(r')) = (r, S(r))$ in both scenarios, item $i$ will lose again in the follow-up mechanism $\mathcal{M}$, when it reports the higher bid $c'_{i} > c_{i}$.

The payment $\overline{p}_i = \min \{p_i, ~ \frac{v_i}{r}\}$ of the composition mechanism $\overline{\mathcal{M}}$ is exactly the {\em threshold bid} for an item $i \in \overline{W}$ to keep winning: (1)~passing the {\sc Pruning-Mechanism} stage requires a bid of at least $\frac{v_i}{r}$; and (2)~winning in the follow-up mechanism $\mathcal{M}$ (after passing the {\sc Pruning-Mechanism} stage) requires a bid of at least $p_i$.

In addition, a winning item $i \in \overline{W}$ cannot improve its utility by reporting a lower bid $c'_{i} < c_{i}$. As mentioned, when this winning item bids a lower $c'_{i} < c_{i}$, the {\sc Pruning-Mechanism} returns the same pair $(r', S'(r')) = (r, S(r))$. Since the follow-up mechanism $\mathcal{M} = (\allocs,\prices)$ is truthful (i.e.\ a monotone allocation rule and a threshold-based payment rule), item $i$ gets the same payment $p'_{i} = p_{i}$ under either bid $c'_{i}$ or $c_{i}$. The composition mechanism $\overline{\mathcal{M}}$ thus has the same payment $\overline{p}'_{i} = \min \{p'_{i}, \frac{v_{i}}{r'}\} = \min \{p_{i}, \frac{v_{i}}{r}\} = \overline{p}_{i}$ in both scenarios.

This completes the proof of Lemma~\ref{lem:append}.
\end{proof}

We show now several useful properties of the output $(r, S(r))$ of {\sc Pruning-Mechanism}.
\begin{lemma}[Pruning Mechanism]
\label{lem:pruning:approx}
Let $i^* \in \arg\max \{v_i ~ | ~ i \in S(r)\}$ denote the highest-value item or one of the highest-value items,\footnote{When there are multiple  highest-value items, we break ties lexicographically.} and let $T \eqdef S(r) \setminus \{i^*\}$. Then the following hold:
\begin{enumerate}[label=(\alph*)., font = {\em\bfseries}]
\item $c_i \leq \frac{v_i}{r} \leq B$ for each item $i \in S(r)$.
\item $v(T) \leq r B < v(S(r))$.
\item $\fopt \leq v(S(r)) + r \cdot (B - c(S(r)) < 2 \cdot v(S(r))$.
\end{enumerate}
\end{lemma}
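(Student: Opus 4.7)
The plan is to prove the three bounds in order: (a) is a direct definitional check, (b) follows from the loop's stopping condition together with an invariant preserved across iterations, and (c) follows from (b) via an exchange argument comparing the fractional optimum to the indicator vector of $S(r)$.

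For (a), the bound $c_i \leq v_i/r$ is precisely the defining property of membership in $S(r)$. The upper bound $v_i/r \leq B$ holds because {\sc Pruning-Mechanism} initializes $r$ to $\frac{1}{B}\max_j v_j$ and only increases it thereafter, so at termination $r \geq v_i/B$ for every $i \in [n]$, i.e., $v_i/r \leq B$.

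For (b), the non-strict bound $v(T) \leq rB$ is simply the negation of the while-loop's continuation condition at termination. For the strict inequality $rB < v(S(r))$, I would maintain it as an invariant preserved by each iteration of the loop. Two kinds of events can end an iteration: either $r$ rises continuously until $rB$ first matches the current $v(T)$ (without any item being discarded at that moment), in which case $rB = v(T) = v(S(r)) - v_{i^*} < v(S(r))$ using $v_{i^*} > 0$; or some item $k$ is discarded. In the discard case, immediately before the discard the while condition still held, so $rB < v(T_{\mathrm{old}})$, where $T_{\mathrm{old}}$ denotes the pre-discard $T$. A short case analysis on whether or not $k$ equals the old maximum-value item then yields $rB < v(S(r))$ after the discard: if $k \neq i^*_{\mathrm{old}}$, the maximum of $S(r)$ is unchanged and $v(T_{\mathrm{old}}) = v(S(r)) + v_k - v_{i^*_{\mathrm{old}}} \leq v(S(r))$ because $v_k \leq v_{i^*_{\mathrm{old}}}$; if $k = i^*_{\mathrm{old}}$, the new $S(r)$ is exactly the old $T$, so $v(T_{\mathrm{old}}) = v(S(r))$ directly. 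I expect this case analysis to be the main obstacle, since it requires carefully tracking the loop's state through its two possible exit modes.

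For (c), I would use an exchange argument. Let $x \in [0,1]^n$ be the optimal fractional solution of value $\fopt$ and let $y \in \{0,1\}^n$ be the indicator of $S(r)$. Writing $\fopt - v(S(r)) = \sum_i (x_i - y_i) v_i$, every $i \in S(r)$ satisfies $v_i \geq r c_i$ (by (a)) together with $x_i - y_i \leq 0$, while every $i \notin S(r)$ satisfies $v_i \leq r c_i$ (since $r$ has risen past $v_i/c_i$) together with $x_i - y_i \geq 0$. In either case the summand is bounded above by $r(x_i - y_i) c_i$. Summing and using $\sum_i x_i c_i \leq B$ and $\sum_i y_i c_i = c(S(r))$ yields the first inequality $\fopt \leq v(S(r)) + r(B - c(S(r)))$. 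The second inequality is equivalent to $r(B - c(S(r))) < v(S(r))$, which follows immediately from (b) since $r(B - c(S(r))) = rB - rc(S(r)) \leq rB < v(S(r))$.
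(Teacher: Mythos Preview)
Your argument is correct. Parts (a) and (b) follow essentially the paper's proof; the paper also splits the termination of the \texttt{While-Loop} into the continuous-increase case and the discard case, handling the latter with the single line $rB < v(S(r)) + v_k - \max\{v_{i^*},\, v_k\} \leq v(S(r))$ rather than your case split on whether $k$ coincided with the old maximizer.

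Part (c) is where you genuinely diverge. The paper exploits the fact that both \textsc{Pruning-Mechanism} and the greedy \textsf{Fractional Knapsack} algorithm process items by value-per-cost ratio with the same tie-breaking, so necessarily either $S(r) \subseteq \{i : x_i = 1\}$ or $S(r) \supseteq \{i : x_i > 0\}$; it then treats each case separately. Your exchange argument sidesteps this structural dichotomy: since $(x_i - y_i)(v_i - r c_i) \leq 0$ for every $i$ (the two factors have opposite signs both inside and outside $S(r)$), summing gives $\fopt - v(S(r)) \leq r\,(B - c(S(r)))$ in one stroke. This is shorter, does not rely on the greedy structure of the optimum, and in fact bounds $\sum_i x_i v_i$ for \emph{any} budget-feasible fractional $x$, so it is both more elementary and slightly more general than the paper's route.
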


\begin{proof}
{\bf Property~(a).}
The first inequality follows from {\tt Step~(1b)} of {\sc Pruning-Mechanism}; the second inequality holds, since the ratio $r$ is initialized to be $\frac{1}{B} \cdot \max\{v_i ~ | ~ i \in [n]\}$, and keeps increasing during the {\tt While-Loop}.

{\bf Property~(b).}
We observe that the first inequality is a reformulation of the {\tt Stop-Condition} of the {\tt While-Loop}. To prove the second inequality, we note that there are two possibilities that can lead to the termination of the {\tt While-Loop}, and $r B < v(S(r))$ holds in both cases.
\begin{itemize}
\item {\tt [Increase of ratio $r$].} Continuous increase of $r$ implies $r B = v(T) < v(S(r))$.
\item {\tt [Discard of an item $k$].} Value-per-cost ratio $r$ is fixed before and after the discard. Before the discard, in that {\tt Stop-Condition} has not been invoked,
    \[
    r B < v(S(r)) + v_k - \max \{v_{i^*}, ~ v_k\} \leq v(S(r)).
    \]
\end{itemize}

{\bf Property~(c).}
The second inequality follows from Property~(b). We show the first inequality based on case analysis. Let $\allocs = (\alloci)_{i = 1}^{n}$ denote the solution to the {\sf Fractional Knapsack} problem. We have either $S(r) \subseteq \{i \in [n] ~ | ~ \alloci = 1\}$ or $S(r) \supseteq \{i \in [n] ~ | ~ \alloci > 0\}$. This claim holds since: (1)~{\sc Pruning-Mechanism} discards the items in increasing order of the value-per-cost ratios; but (2)~the greedy algorithm takes the items in decreasing order of the value-per-cost ratios; and (3)~in both processes, we break ties lexicographically.
\begin{itemize}
\item {\tt [When $S(r) \subseteq \{i \in [n] ~ | ~ \alloci = 1\}$].}
    We notice that $c(S(r)) \leq \sum_{i \in [n]} x_{i} \cdot c_{i} \le B$. Namely, regarding the {\sf Fractional Knapsack} optimum, the total cost $\sum_{i \in [n]} x_{i} \cdot c_{i}$ is at least the cost on the items in $S(r)$, and is at most the budget $B$. In addition, every item $i \notin S(r)$ has a value-per-cost ratio $\frac{\vali}{\costi}\le r$. Consequently, the total value of the items beyond set $S(r)$ is $\sum_{i \notin S(r)} \alloci \cdot \vali \le r \cdot \sum_{i \notin S(r)} \alloci \cdot c_i \leq r \cdot (B - c(S(r)))$.
    
    \item {\tt [When $S(r) \supseteq \{i \in [n] ~ | ~ \alloci> 0\}$].}
    We have $\sum_{i \in [n]} x_{i} \cdot c_{i} \leq c(S(r))$ and $\sum_{i \in [n]} x_{i} \cdot c_{i} \leq B$, and every item $i \in S(r)$ has a value-per-cost ratio $\frac{\vali}{\costi} \ge r$. As a result, $v(S(r)) - \fopt = \sum_{i \in S(r)} (1 - \alloci) \cdot \vali \ge r \cdot \sum_{i \in S(r)} (1 - \alloci) \cdot c_i \geq r \cdot (c(S(r)) - B)$.
\end{itemize}

This completes the proof of properties~(a), (b), and (c).
\end{proof}

\noindent
{\bf Mechanisms in the Second Stage.}
Given Lemma~\ref{lem:append}, {\sc Pruning-Mechanism} can be composed with any follow-up truthful mechanism. Actually, we focus on the class of {\em posted-price} mechanisms.\footnote{To obtain our $3$-approximation deterministic mechanism in Section~\ref{sec:det}, we actually use an adaptive posted-price scheme. Namely, the take-it-or-leave price offered to a specific item $i \in S(r)$ can change, depending on whether the items that have already made decisions accepted or rejected their posted-price offers.} Such a mechanism is determined by a set of prices $(B_i)_{i \in S(r)}$ subject to the budget constraint $\sum_{i\in S(r)}B_i\le B$, and naturally meets the individual rationality, the budget feasibility, and the truthfulness.\footnote{In the case of a randomized mechanism, any realization is given by a particular set of budget-feasible posted prices $(B_i)_{i \in S(r)}$, i.e., a truthful deterministic mechanism. Thus, this randomized mechanism is universally truthful.}

To illustrate how to analyze the approximability of a two-stage posted-price mechanism, and as a warm-up exercise, below we discuss two simple mechanisms.

\vspace{.1in}
\noindent
{\bf Warm-Up.}
Our first mechanism (see Figure~\ref{fig:first_warm_up}) chooses the higher-value subset between $\{i^*\}$ and $T$ as the winning set $W$, where $i^* \in \arg\max \{v_i ~ | ~ i \in S(r)\}$ is the highest-value item and $T = S(r) \setminus \{i^*\}$ (see Lemma~\ref{lem:pruning:approx}), by offering price $\frac{\vali}{r}$ to each $i \in \{i^*\}$ or to each $i \in T$. Hence, we deduce from Lemma~\ref{lem:pruning:approx}~(c) that $\fopt \leq 2 \cdot v(S(r)) \leq 4 \cdot \max \{v_{i^*}, ~ v(T)\} = 4 \cdot \alg$.

\begin{figure}[htbp]
    \centering
    \fbox{\begin{minipage}{0.82\textwidth}
    \underline{\sc First-Warm-Up-Mechanism}
    \vspace{-.5em}
    {\tt \begin{enumerate}[itemindent=-0.5em]
        \setcounter{enumi}{-1}
        \item Receive the pair $(r, S(r))$ from {\sc Pruning-Mechanism}
        
        \item If $v_{i^*} \geq v(T)$,\footnote{Item $i^*$ will accept the offer $\frac{v_{i^*}}{r}$, by Lemma~\ref{lem:pruning:approx} (a) that $c_{i^*} \leq \frac{v_{i^*}}{r}$.} get item $i^*$ by offering price $\frac{v_{i^*}}{r}$
        
        \item Else,\footnote{Each item $i \in T$ will accept the offer $\frac{v_i}{r}$, by Lemma~\ref{lem:pruning:approx} (a) that $c_i \le \frac{v_i}{r}$.} get items $T$ by offering price $\frac{v_i}{r}$ to each item $i \in T$
    \end{enumerate}}
    \end{minipage}}
    \caption{A $4$-approximation deterministic budget-feasible mechanism.}
    \label{fig:first_warm_up}
\end{figure}

Our second posted-price mechanism (see Figure~\ref{fig:second_warm_up}) recovers the best known result of $(2+ \sqrt{2})$ by Chen et al.~\cite{CGL11}. This statement is formalized as the following theorem.

\begin{figure}[htbp]
    \centering
    \fbox{\begin{minipage}{0.77\textwidth}
    \underline{\sc Second-Warm-Up--Mechanism}
    \vspace{-.5em}
    {\tt \begin{enumerate}[itemindent=-0.5em]
        \setcounter{enumi}{-1}
        \item Receive the pair $(r, S(r))$ from {\sc Pruning-Mechanism}
        
        \item If $v_{i^*} \geq \sqrt{2} \cdot v(T)$, get item $i^*$ by offering price $\frac{v_{i^*}}{r}$
        
        \item Else,
        \begin{enumerate}[itemindent=-0.5em]
            \item Get items $T$ by offering price $\frac{v_i}{r}$ to each item $i \in T$
            
            \item Offer price\footnote{Notice from Lemma~\ref{lem:pruning:approx}~(b) that $0 \leq B - \frac{v(T)}{r} < \frac{v_{i^*}}{r}$.} $B - \frac{v(T)}{r}$ to item $i^*$
        \end{enumerate}
    \end{enumerate}}
    \end{minipage}}
    \caption{A new $(2 + \sqrt{2})$-approximation deterministic budget-feasible mechanism.}
    \label{fig:second_warm_up}
\end{figure}

\begin{theorem}
\label{thm:warmup}
{\sc Second-Warm-Up--Mechanism} is a $(2+ \sqrt{2})$-approximation mechanism (individually rational, budget-feasible, and truthful) against the {\sf Fractional Knapsack} benchmark.
\end{theorem}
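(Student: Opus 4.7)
The overall plan is to lean on Lemma~\ref{lem:append} for most of the mechanism-design properties and then reduce the whole analysis to two or three calls to Lemma~\ref{lem:pruning:approx}. Concretely, I would first check that the follow-up stage in Figure~\ref{fig:second_warm_up} is a bona fide budget-feasible posted-price scheme on $S(r)$: every offer $\frac{v_i}{r}$ is at least $c_i$ by Lemma~\ref{lem:pruning:approx}(a), the offer $B - \frac{v(T)}{r}$ to $i^*$ is non-negative by Lemma~\ref{lem:pruning:approx}(b), and the posted prices in each branch sum to at most $B$ (in the first branch only $i^* $ is paid $\frac{v_{i^*}}{r} \le B$; in the second branch the offers sum to exactly $B$). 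Posted-price schemes are automatically individually rational and truthful, so Lemma~\ref{lem:append} upgrades the whole composition to an individually rational, budget-feasible, truthful mechanism.

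For the approximation guarantee I would split on the branch taken by Step~1 vs.\ Step~2. In the Step~1 branch, where $v_{i^*} \ge \sqrt{2}\,v(T)$, we have $\alg = v_{i^*}$, and from Lemma~\ref{lem:pruning:approx}(b)--(c) I get $\fopt < 2\,v(S(r)) = 2v_{i^*} + 2v(T) \le (2+\sqrt{2})\,v_{i^*}$, which is exactly the bound we want.

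The Step~2 branch is where the mechanism gets interesting, and it is the part I expect to require the most care. Each $i \in T$ accepts and contributes $v(T)$ to $\alg$, but $i^*$ may or may not accept the offer $B - \frac{v(T)}{r}$. If $i^*$ accepts, then $\alg = v(S(r))$ and the generic bound $\fopt \le 2\,v(S(r))$ from Lemma~\ref{lem:pruning:approx}(c) already yields ratio at most $2 < 2+\sqrt{2}$. If $i^*$ declines, then by individual rationality $c_{i^*} > B - \frac{v(T)}{r}$, i.e., $r\,c_{i^*} > rB - v(T)$; plugging this (together with $c(S(r)) \ge c_{i^*}$) into Lemma~\ref{lem:pruning:approx}(c) gives
\[
\fopt \;\le\; v(S(r)) + r\bigl(B - c(S(r))\bigr) \;\le\; v_{i^*} + v(T) + rB - r\,c_{i^*} \;<\; v_{i^*} + 2\,v(T),
\]
and the branch condition $v_{i^*} < \sqrt{2}\,v(T)$ converts this into $\fopt < (2+\sqrt{2})\,v(T) = (2+\sqrt{2})\,\alg$.

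The main obstacle, and the only genuinely non-mechanical step, is noticing that the very event ``$i^*$ declines'' supplies a useful lower bound on $c_{i^*}$ that can be fed back into Lemma~\ref{lem:pruning:approx}(c); once this is spotted, the rest is just arithmetic on the three possible sub-scenarios, and each of them comfortably meets the $(2+\sqrt{2})$ threshold.
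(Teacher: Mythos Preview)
Your proposal is correct and follows essentially the same route as the paper: the paper also splits on the branch $v_{i^*}\gtrless\sqrt{2}\,v(T)$, uses $\fopt\le 2\,v(S(r))$ from Lemma~\ref{lem:pruning:approx}(c) in the first branch and in the ``$i^*$ accepts'' sub-case, and in the ``$i^*$ declines'' sub-case feeds $c(S(r))\ge c_{i^*}>B-\tfrac{v(T)}{r}$ into the sharper inequality of Lemma~\ref{lem:pruning:approx}(c) to obtain $\fopt< v_{i^*}+2\,v(T)$. Your treatment of individual rationality, budget feasibility, and truthfulness is more explicit than the paper's (which simply declares them obvious), but the approximation argument matches line for line.
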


\begin{proof}
We only show the approximability via case analysis; the other properties are obvious.
\begin{itemize}
    \item {\tt [Case~1 that $v_{i^*} \geq \sqrt{2} \cdot v(T)$].}
    The highest-value item $i^*$ is the only winner, and thus the outcome value $\alg = v_{i^*}$. Then according to Lemma~\ref{lem:pruning:approx}~(c), we have
    \[
        \fopt
        \leq 2 \cdot v(S(r))
        = 2 \cdot (v_{i^*} + v(T))
        \leq (2 + \sqrt{2}) \cdot v_{i^*}
        = (2 + \sqrt{2}) \cdot \alg.
    \]
    
    \item {\tt [Case~2 that $v_{i^*} < \sqrt{2} \cdot v(T)$].}
    There are two possibilities. First, when $c_{i^*} \leq B - \frac{v(T)}{r}$, all items $i \in S(r)$ together form the winning set $W$, i.e., $\alg = v(S(r))$. Due to Lemma~\ref{lem:pruning:approx}~(c), $\fopt \leq 2 \cdot v(S(r)) = 2 \cdot \alg$. Second, when $c_{i^*} > B - \frac{v(T)}{r}$, only the items $i \in T$ are chosen as the winners, i.e., $\alg = v(T)$. Consequently,
    \begin{align*}
        \fopt
        & \leq v(S(r)) + r \cdot (B - c(S(r))) && \text{(Lemma~\ref{lem:pruning:approx} (c))} \\
        & \leq v(S(r)) + v(T)
        && \text{(as $c(S(r)) \geq c_{i^*} > B - \tfrac{v(T)}{r}$)} \\
        & = v(i^*) + 2 \cdot v(T)
        && \text{(as $v(S(r)) = v(i^*) + v(T)$)} \\
        & < (2+ \sqrt{2}) \cdot \alg.
        && \text{(as $v_{i^*} < \sqrt{2} \cdot v(T) = \sqrt{2} \cdot \alg$)}.
    \end{align*}
\end{itemize}
This completes the proof of Theorem~\ref{thm:warmup}
\end{proof}

We emphasizes that our {\sc Second-Warm-Up--Mechanism} achieves a $2$-approximation, when $v_{i^*} < \sqrt{2} \cdot v(T)$ and $c_{i^*} \le B - \frac{v(T)}{r}$. One might ask a natural question: is it possible to achieve a better trade-off between this $2$-approximation case and the $(2+ \sqrt{2})$-approximation cases? In the next section, we will confirm this guess by presenting a slightly more complicated {\em adaptive} posted-price scheme, resulting in a $3$-approximation deterministic mechanism.

\section{Deterministic Mechanism}
\label{sec:det}

The warm-up mechanisms have merely a few possible outcomes, and do not adapt to the decisions of the items:  either the highest-value item $i^*$, or the remaining items $T$, or rarely both of item $i^*$ and items $T$ win; all the posted prices $(B_i)_{i \in S(r)}$ are almost equal to the maximum possible values $(\frac{\vali}{r})_{i \in S(r)}$. Such rigid structure hinders both warm-up mechanisms from achieving better performance guarantees than a $(2+ \sqrt{2})$-approximation.

Now we give a mechanism (called {\sc Deterministic-Mechanism}) that achieves a better approximation. This mechanism (first stage) gets the pair $(r,S(r))$ via the {\sc Pruning-Mechanism} given in Section~\ref{sec:pruning}, and then (second stage) applies an adaptive posted-price scheme.

\begin{figure}[htbp]
    \centering
    \fbox{\begin{minipage}{0.94\textwidth}
    \underline{\sc Deterministic-Mechanism}
    \vspace{-.5em}
    {\tt \begin{enumerate}[itemindent=-0.5em]
        \setcounter{enumi}{-1}
        \item Receive the pair $(r, S(r))$ from {\sc Pruning-Mechanism}
        
        \item If $v_{i^*} \leq \frac{1}{2} \cdot v(T)$, get items $T$ by offering price $\frac{v_{i}}{r}$ to each item $i\in T$
        
        \item Else if $v_{i^*} \geq 2 \cdot v(T)$, get item $i^*$ by offering price $\frac{v_{i^*}}{r}$ to $i^*$
        
        \item Else, i.e., when $\frac{1}{2} \cdot v(T) < v_{i^*} < 2 \cdot v(T)$:
        \begin{enumerate}[itemindent=-0.5em]
            \item Offer price $B_{i^*} \eqdef \min \{\frac{v_{i^*}}{r}, ~ \frac{2 \cdot v_{i^*} - v(T)}{v(S(r))} \cdot B\}$ to item $i^*$
            
            \item If $c_{i^*} \leq B_{i^*}$,\footnote{If $c_{i^*} \leq B_{i^*}$, item $i^*$ will accept offer $B_{i^*}$. Otherwise, $c_{i^*} > B_{i^*}$ and item $i^*$ will reject offer $B_{i^*}$, and then each item $i \in T$ will accept offer $B_i$.} offer $B_i \eqdef \min \{\frac{v_{i}}{r}, ~ \frac{v_i}{v(T)} \cdot (B - B_{i^*})\}$ to each item $i \in T$
            
            \item Else, get items $T$ by offering price $\frac{v_{i}}{r}$ to each item $i\in T$
        \end{enumerate}
    \end{enumerate}}
    \end{minipage}}
    \caption{The $3$-approximation deterministic budget-feasible mechanism.}
    \label{fig:deterministic}
\end{figure}


\begin{theorem}
\label{thm:dtm}
{\sc Deterministic-Mechanism} is a $3$-approximation mechanism (individually rational, budget-feasible, and truthful) against the {\sf Fractional Knapsack} benchmark.
\end{theorem}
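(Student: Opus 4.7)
The plan is to first invoke Lemma~\ref{lem:append} to establish individual rationality, budget feasibility, and universal truthfulness, and then verify the $3$-approximation separately in each of the three cases of the mechanism. For the composability step one observes that every offer ($B_{i^\ast}$ or $B_i$) is capped by $v_i/r$ and is independent of the recipient's own bid, so the adaptive posted-price scheme on $S(r)$ is truthful; budget feasibility holds in step~3(b) since $B_{i^\ast}+\sum_{i\in T}B_i\le B_{i^\ast}+(B-B_{i^\ast})=B$, while in the other branches the offers total $v_{i^\ast}/r$ or $v(T)/r$, both $\le B$ by Lemma~\ref{lem:pruning:approx}(a,b). Cases~1 and~2 are immediate: in each of them $\alg\ge\frac{2}{3}v(S(r))$, so Lemma~\ref{lem:pruning:approx}(c) forces $\fopt<2v(S(r))\le 3\alg$.

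Case~3 is the heart of the argument and I would split it by whether $i^\ast$ accepts. If $i^\ast$ rejects, then $c_{i^\ast}>B_{i^\ast}$ forces the second argument of the min in the definition of $B_{i^\ast}$ to be active (otherwise $B_{i^\ast}=v_{i^\ast}/r\ge c_{i^\ast}$ by Lemma~\ref{lem:pruning:approx}(a)); every item in $T$ accepts $v_i/r$, so $\alg=v(T)$. Inserting $c(S(r))\ge c_{i^\ast}>B_{i^\ast}=\frac{(2v_{i^\ast}-v(T))B}{v(S(r))}$ together with $rB<v(S(r))$ from Lemma~\ref{lem:pruning:approx}(b) into Lemma~\ref{lem:pruning:approx}(c) yields, after one line of algebra, $\fopt<v(S(r))+rB\cdot\frac{2v(T)-v_{i^\ast}}{v(S(r))}<3v(T)=3\alg$.

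The main obstacle is the other subcase, where $i^\ast$ accepts but some items in $T$ may still reject. Let $T^a$ and $T^r$ be the accepting and rejecting subsets of $T$; then $\alg=v_{i^\ast}+v(T^a)=v(S(r))-v(T^r)$. If $T^r=\emptyset$, the standard bound $\fopt<2v(S(r))$ from Lemma~\ref{lem:pruning:approx}(c) finishes the proof. Otherwise, every $i\in T^r$ satisfies $c_i>\frac{v_i(B-B_{i^\ast})}{v(T)}$ (since $c_i\le v_i/r$ rules out the other branch of the min defining $B_i$). Summing over $T^r$, plugging the resulting lower bound on $c(S(r))$ into Lemma~\ref{lem:pruning:approx}(c), and regrouping, the target inequality $\fopt\le 3\alg$ becomes linear in $v(T^r)\in(0,v(T)]$, whose worst case is $v(T^r)=v(T)$, at which point everything collapses to the single requirement $rB_{i^\ast}\le 2v_{i^\ast}-v(T)$. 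This is exactly what the adaptive price $B_{i^\ast}=\min\{v_{i^\ast}/r,\frac{2v_{i^\ast}-v(T)}{v(S(r))}B\}$ is engineered to guarantee: if the second branch is active then $rB_{i^\ast}=\frac{(2v_{i^\ast}-v(T))rB}{v(S(r))}<2v_{i^\ast}-v(T)$ by Lemma~\ref{lem:pruning:approx}(b), and if the first branch is active then $rB_{i^\ast}=v_{i^\ast}$, in which case the activation of that branch combined with $rB<v(S(r))$ already forces $v(T)<v_{i^\ast}$, whence $rB_{i^\ast}=v_{i^\ast}<2v_{i^\ast}-v(T)$. I expect the hardest conceptual step to be recognizing that the prices to $i\in T$ are set proportionally to $v_i$ precisely so that the worst case of the linear reduction lands at $v(T^r)=v(T)$, and that the threshold $B_{i^\ast}$ is tuned so the whole analysis collapses to one balanced inequality.
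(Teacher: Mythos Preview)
Your proposal is correct and follows essentially the same route as the paper: the same case split, the same lower bound $c(S(r))>\frac{v(T^r)}{v(T)}(B-B_{i^*})$ in Case~3(b), and the same combination of $rB<v(S(r))$ with the cap on $B_{i^*}$ to close both subcases of Case~3. The paper presents Case~3(b) as a direct algebraic chain and simply uses $B_{i^*}\le\frac{2v_{i^*}-v(T)}{v(S(r))}B$ (valid regardless of which branch of the $\min$ is active), so your separate treatment of the two branches is unnecessary; when you write up the linear worst-case step, make explicit that the slope in $v(T^r)$ is positive, which follows from $r(B-B_{i^*})<rB<v(S(r))<3v(T)$ in Case~3.
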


\begin{proof}
The individual rationality and the truthfulness are easy to see, regarding the pricing nature of {\sc Deterministic-mechanism}, Lemma~\ref{lem:append}, and Lemma~\ref{lem:pruning:approx}~(a). To show the budget feasibility, we consider either {\tt Case~(3b)} or {\tt Case~(3c)} in the mechanism:
\begin{itemize}
\item {\tt [Case~(3b)].} $\sum_{i \in W} B_i \leq B_{i^*} + \sum_{i \in T} \frac{v_i}{v(T)} \cdot (B - B_{i^*}) = B$.
\item {\tt [Case~(3c)].} Since $W = T$, we know from Lemma~\ref{lem:pruning:approx}~(b) that $\sum_{i \in W} \frac{v_i}{r} = \frac{v(T)}{r} \leq B$.
\end{itemize}

We now show the approximation guarantee. Both of {\tt Case~(1)} and {\tt Case~(2)}, where either $v_{i^*} \leq \frac{1}{2} \cdot v(T)$ or $v_{i^*} \geq 2 \cdot v(T)$, are easy to analyze. Since $\alg = \max\{v_{i^*}, ~ v(T)\}$ in either case,
\begin{align*}
    \fopt
    < 2 \cdot v(S(r))
    = 2 \cdot (v_{i^*} + v(T))
    \le 3 \cdot \max\{v_{i^*}, ~ v(T)\}
    = 3 \cdot \alg,
\end{align*}
where the first step applies Lemma~\ref{lem:pruning:approx}~(c), and the third step holds since we have $2 \cdot v_{i^*} \le v(T)$ or $v_{i^*} \geq 2 \cdot v(T)$ in both cases.

From now on, we safely assume $\frac{1}{2} \cdot v(T) < v_{i^*} < 2 \cdot v(T)$. Conditioned on either $c_{i^*} \leq B_{i^*}$ or $c_{i^*} > B_{i^*}$, we are only left to deal with {\tt Case~(3b)} and {\tt Case~(3c)}.

\vspace{.1in}
\noindent
{\tt [Case~(3b) that $c_{i^*} \leq B_{i^*}$].}
We denote by $U \eqdef \left\{i \in T ~ | ~ c_i \leq B_i\right\}$ the set of winners in $T$, so the outcome value $\alg = v_{i^*} + v(U)$. Of course, a losing item $i \in (T \setminus U)$ rejects the offered price $B_i = \min \{\frac{v_i}{r}, ~ \frac{v_i}{v(T)} \cdot (B - B_{i^*})\}$ (by definition), since it has a too large cost $c_{i} > B_{i}$. But this losing item was not discarded during {\sc Pruning-Mechanism}, so it has a high enough value-per-cost ratio $\frac{v_{i}}{c_{i}} \geq r$ (see Lemma~\ref{lem:pruning:approx}~(a)) and thus a cost $c_{i} \leq \frac{v_{i}}{r}$. For these reasons, the price offered to this losing item is exactly $B_{i} = \frac{v_i}{v(T)} \cdot (B - B_{i^*})$. We deduce that
\begin{equation}
    \label{eq:dtm:cost}
    c(S(r)) \geq \sum_{i \in (T \setminus U)} c_i > \sum_{i \in (T \setminus U)} B_i = \tfrac{v(T \setminus U)}{v(T)} \cdot (B - B_{i^*}).
\end{equation}

By Lemma~\ref{lem:pruning:approx}~(c), $\fopt \leq v(S(r)) + r \cdot (B - c(S(r)))$. We plug inequality~\eqref{eq:dtm:cost} into it and get
\begin{align*}
    \fopt
    ~ \overset{\eqref{eq:dtm:cost}}{<} ~ & v(S(r)) + r \cdot (\tfrac{v(U)}{v(T)} \cdot B + \tfrac{v(T \setminus U)}{v(T)} \cdot B_{i^*}) \\
    ~ < ~ & v(S(r)) \cdot (1 + \tfrac{v(U)}{v(T)} + \tfrac{v(T \setminus U)}{v(T)} \cdot \tfrac{B_{i^*}}{B})
    && \text{(Lemma~\ref{lem:pruning:approx}~(b): $r B < v(S(r))$)} \\
    ~ \le ~ & v(S(r)) \cdot (1 + \tfrac{v(U)}{v(T)}) + \tfrac{v(T \setminus U)}{v(T)} \cdot (2 \cdot v_{i^*} - v(T))
    && \text{(as $B_{i^*} \le \tfrac{2 \cdot v_{i^*} - v(T)}{v(S(r))}\cdot B$)} \\
    ~ = ~ & 3 \cdot v_{i^*} + v(U) \cdot (2 - \tfrac{v_{i^*}}{v(T)})
    && \text{(as $v(S(r)) = v_{i^*} + v(T)$)} \\
    ~ \leq ~ & 3 \cdot v_{i^*} + 3 \cdot v(U)
    ~ = ~ 3 \cdot \alg.
\end{align*}

\noindent
{\tt [Case~(3c) that $c_{i^*} > B_{i^*}$].}
According to Lemma~\ref{lem:pruning:approx}~(a), $c_{i^*} \le \frac{v_{i^*}}{r}$, and $c_i \le \frac{v_i}{r}$ for any $i \in T$. Since $B_{i^*} < c_{i^*} \leq \frac{v_{i^*}}{r}$, we have $B_{i^*} = \min \{\frac{v_{i^*}}{r}, ~ \frac{2 \cdot v_{i^*} - v(T)}{v(S(r))} \cdot B\} = \frac{2 \cdot v_{i^*} - v(T)}{v(S(r))} \cdot B$.

In this case, the highest-value item $i^*$ rejects its offer, but all the remaining items $i \in T$ accept their offers. Thus, the winning set is $W = T$, and the outcome value is $\alg = v(T)$. We then deduce that
\begin{align*}
    \fopt
    ~ \leq ~ & v(S(r)) + r \cdot (B - c(S(r)))
    && \text{(Lemma~\ref{lem:pruning:approx}~(c))} \\
    ~ \leq ~ & v(S(r)) + r \cdot (B - B_{i^*}) && \text{(as $c(S(r)) \geq c_{i^*} > B_{i^*}$)} \\
    ~ \leq ~ & v(S(r)) \cdot (2 - \tfrac{B_{i^*}}{B}) && \text{(Lemma~\ref{lem:pruning:approx}~(b): $r B < v(S(r))$)} \\
    ~ = ~ & 3 \cdot v(T)
    ~ = ~ 3 \cdot \alg.
    && \text{(as $B_{i^*} = \tfrac{2 \cdot v_{i^*} - v(T)}{v(S(r))} \cdot B$)}
\end{align*}

To conclude, we have $3 \cdot \alg \ge \fopt$ in all cases, which completes the proof of Theorem~\ref{thm:dtm}.
\end{proof}

\subsection{Matching Lower Bound}
\label{subsec:dtm:lower}
Against the {\sf Fractional Knapsack} benchmark, our {\sc Deterministic-Mechanism} turns out to have the best possible approximation ratio among all deterministic mechanisms. To see so, we now construct a matching lower-bound instance, which is similar to~\cite[Proposition~5.2]{S10}.
\begin{theorem}
\label{thm:lower:dtm}
No deterministic mechanism (truthful, individually rational and budget-feasible) has an approximation ratio less than $3$ against the {\sf Fractional Knapsack} benchmark, even if there are only three items.
\end{theorem}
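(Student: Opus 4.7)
The plan is to show that for any deterministic truthful, individually rational, budget-feasible mechanism $M$ there is a three-item instance on which $\fopt/\alg$ can be pushed arbitrarily close to $3$. Fix a small $\varepsilon > 0$ and set $B = 1$. First consider the symmetric ``hard'' instance $I_{1}$ with values $(v_{1},v_{2},v_{3}) = (1,1,1)$ and costs $(c_{1},c_{2},c_{3}) = (\tfrac{1}{2}+\varepsilon,\,\tfrac{1}{2}+\varepsilon,\,\tfrac{1}{2}+\varepsilon)$. Any two of these items together cost $1 + 2\varepsilon > B$; since individual rationality forces every winner's payment to be at least its own cost, budget feasibility forces $M$ to select at most one item on $I_{1}$. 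If $M$ selects none then $\alg(I_{1}) = 0$ while $\fopt(I_{1}) \to 2$, so the ratio is already infinite; hence assume $M$ selects exactly one item, and by relabeling this is item~$1$.

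By Myerson's characterization (recalled in Section~\ref{sec:prelim}), the payment to item~$1$ in $I_{1}$ equals the threshold $T := t_{1}(c_{2},c_{3})$ evaluated at $c_{-1} = (\tfrac{1}{2}+\varepsilon,\,\tfrac{1}{2}+\varepsilon)$, with $T \ge \tfrac{1}{2}+\varepsilon$ by IR and $T \le B = 1$ by budget feasibility. Because this threshold depends on $c_{-1}$ only, it will transfer unchanged to any instance with the same $c_{2},c_{3}$. Consider the companion instance $I_{2}$ that lowers only item~$1$'s cost: $(c_{1},c_{2},c_{3}) = (\varepsilon,\,\tfrac{1}{2}+\varepsilon,\,\tfrac{1}{2}+\varepsilon)$. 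On $I_{2}$ the threshold for item~$1$ is still $T$, and since $\varepsilon < T$, item~$1$ is again a winner at payment $T \ge \tfrac{1}{2}+\varepsilon$.

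Next I argue that neither item~$2$ nor item~$3$ can be a winner in $I_{2}$. If, say, item~$2$ were selected, IR would give $p_{2} \ge c_{2} = \tfrac{1}{2}+\varepsilon$, and then $p_{1} + p_{2} \ge 1 + 2\varepsilon > B$ would violate budget feasibility. Hence $\alg(I_{2}) = v_{1} = 1$. A direct greedy computation on $I_{2}$ gives
\[
\fopt(I_{2}) \;=\; 1 + 1 + \frac{\tfrac{1}{2} - 2\varepsilon}{\tfrac{1}{2} + \varepsilon} \;=\; \frac{3}{1+2\varepsilon},
\]
which tends to $3$ as $\varepsilon \to 0^{+}$. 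Since $M$ was arbitrary, no deterministic mechanism achieves an approximation ratio strictly below $3$ against \textsf{Fractional Knapsack}, already for $n = 3$.

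The only delicate ingredient is transporting the threshold $T$ from $I_{1}$ to $I_{2}$, which is exactly the content of Myerson's characterization: $t_{i}$ is a function of $c_{-i}$ alone. Everything else is elementary arithmetic: the pairwise infeasibility $2(\tfrac{1}{2}+\varepsilon) > B$ pins $M$ to at most one winner in $I_{1}$, and the same arithmetic, now applied between item~$1$'s payment and the still-large costs of items~$2,3$, prevents $M$ from collecting any further value in $I_{2}$.
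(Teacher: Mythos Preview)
Your proof is correct and follows essentially the same argument as the paper: a symmetric three-item instance with common cost just above $B/2$ forces at most one winner; truthfulness then fixes that winner's threshold payment when its own cost is lowered, and this payment together with the remaining large costs blocks any additional winners, driving $\fopt/\alg$ toward $3$. The only cosmetic differences are that the paper parameterizes the common cost as $c^{*}=B/(2-\varepsilon/2)$ and drops item~$1$'s cost all the way to $0$ (yielding $\fopt=3-\varepsilon/2$ exactly), whereas you use $c^{*}=\tfrac{1}{2}+\varepsilon$ and drop to $\varepsilon$ and take a limit.
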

\begin{proof}
For the sake of contradiction, assume that there is a $(3 - \varepsilon)$-approximation deterministic mechanism, for some constant $\varepsilon > 0$. Consider the following two scenarios with three items having values $v_1 = v_2 = v_3 = 1$. Let $c^* \eqdef \frac{B}{2 - \varepsilon/2}$; notice that $2 c^* > B$.
\begin{itemize}
\item {\tt [With costs $(c^*, c^*, c^*)$].} Due to the individual rationality, each winning item gains a payment of at least $c^*$. To guarantee the promised approximation ratio of $(3 - \varepsilon)$ under budget feasibility, there is exactly one winning item. W.l.o.g., we assume that the winner is the first item.
\item {\tt [With costs $(0, c^*, c^*)$].} By the truthfulness, item $1$ wins once again, getting the same payment of at least $c^*$. As a result, the budget left is at most $(B - c^*) < c^*$. Regarding the budget feasibility and individual rationality, neither item $2$ nor item $3$ can win.
\end{itemize}
In the later scenario, the mechanism generates value $\alg = 1$, yet the {\sf Fractional Knapsack} benchmark achieves value $\fopt = 1 + \frac{B}{c^*} = 3 - \frac{\varepsilon}{2} > 3 - \varepsilon$. This contradicts our assumption that the mechanism is $(3 - \varepsilon)$-approximation, concluding the proof of the theorem.
\end{proof}

\section{Main Result: Randomized Mechanism}
\label{sec:rdm}
We now present the main result of our work, a randomized mechanism (called {\sc Randomized-Mechanism}) that achieves a $2$-approximation to the {\sf Fractional Knapsack} benchmark. Regarding the matching lower bound by Chen et al.~\cite[Theorem~4.2]{CGL11} against the weaker {\sf Knapsack} benchmark, this approximation guarantee is tight for both benchmarks. Our mechanism (first stage) gets the pair $(r,S(r))$ from the {\sc Pruning-Mechanism} given in Section~\ref{sec:pruning}, and then (second stage) applies a randomized non-adaptive posted-price scheme.

\begin{figure}[htbp]
    \centering
    \fbox{\begin{minipage}{0.89\textwidth}
    \underline{\sc Randomized-Mechanism}
    \vspace{-.5em}
    {\tt \begin{enumerate}[itemindent=-0.5em]
        \setcounter{enumi}{-1}
        \item Receive the pair $(r, S(r))$ from {\sc Pruning-Mechanism}
        
        \item Let $q \eqdef \frac{1}{2} \cdot \frac{v(S(r)) - r B}{\min\{v_{i^*}, ~ v(T)\}}$
        
        \item If $v_{i^*} \leq v(T)$, let $q_{i^*} \eqdef (\frac{1}{2} - q)$ and $q_T \eqdef \frac{1}{2}$
        
        \item Else, let $q_{i^*} \eqdef \frac{1}{2}$ and $q_T \eqdef (\frac{1}{2} - q)$
        
        \item Offer price\footnote{For every item $i \in S(r)$, price $B_i$ is well defined in range $[0, ~ \frac{v_i}{r}] \subseteq [0, ~ B]$, by Lemma~\ref{lem:pruning:approx}~(a).} $B_{i^*}$ to item $i^*$, where $B_{i^*}$ is defined as follows:
        \begin{enumerate}[itemindent=-0.5em]
            \item With probability $q_{i^*}$, let $B_{i^*} \eqdef \frac{v_{i^*}}{r}$
            
            \item With probability $q_{T}$, let $B_{i^*} \eqdef B - \frac{v(T)}{r}$
            
            \item With probability $q$, draw $B_{i^*} \sim {\tt Uniform} [B - \frac{v(T)}{r}, ~ \frac{v_{i^*}}{r}]$
        \end{enumerate}
        
        \item Offer price $B_i \eqdef \frac{v_i}{v(T)} \cdot (B - B_{i^*})$ to each item $i \in T$
    \end{enumerate}}
    \end{minipage}}
    \caption{The $2$-approximation randomized budget-feasible mechanism.}
    \label{fig:randomized}
\end{figure}

We first verify that all quantities in {\sc Randomized-Mechanism} are well defined.

\begin{lemma}
\label{lem:random:p}
$0 \leq q = \frac{1}{2} \cdot \frac{v(S(r)) - r B}{\min\{v_{i^*}, ~ v(T)\}} \leq \frac{1}{2}$ and $0 \leq B - \frac{v(T)}{r} < \frac{v_{i^*}}{r}$.
\end{lemma}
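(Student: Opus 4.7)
The plan is to derive both parts of the lemma by directly invoking properties (a) and (b) of Lemma~\ref{lem:pruning:approx}, since everything in sight is a simple rearrangement of the inequalities $v(T) \le rB < v(S(r))$ and $v_{i^*}/r \le B$. I would start by noting the identity $v(S(r)) = v_{i^*} + v(T)$ that will be used throughout, so that the quantity $v(S(r)) - rB$ can be rewritten in two useful ways.

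For the first claim, I would first establish $q \ge 0$: the numerator $v(S(r)) - rB$ is strictly positive by the right inequality of Lemma~\ref{lem:pruning:approx}(b), and the denominator $\min\{v_{i^*}, v(T)\}$ is positive (both $i^*$ and $T$ are nonempty subsets of $S(r)$, as witnessed by the pruning stage's termination). For the upper bound $q \le \tfrac12$, I would verify the two inequalities $v(S(r)) - rB \le v_{i^*}$ and $v(S(r)) - rB \le v(T)$ separately. Using $v(S(r)) = v_{i^*} + v(T)$, the first reduces to $v(T) \le rB$, which is exactly the left inequality of Lemma~\ref{lem:pruning:approx}(b); the second reduces to $v_{i^*} \le rB$, which follows from Lemma~\ref{lem:pruning:approx}(a) applied to $i = i^*$ (namely $v_{i^*}/r \le B$).

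For the second claim, $B - v(T)/r \ge 0$ is just a rearrangement of $v(T) \le rB$ from Lemma~\ref{lem:pruning:approx}(b), and $B - v(T)/r < v_{i^*}/r$ rearranges to $rB < v_{i^*} + v(T) = v(S(r))$, which is the other half of Lemma~\ref{lem:pruning:approx}(b). There is no real obstacle here; the only thing worth being careful about is that the minimum in the denominator of $q$ is strictly positive, which is why I would explicitly note that both $\{i^*\}$ and $T$ are nonempty (for $T$, one needs that the pruning stage's stopping condition $rB \ge v(S(r)) - v_{i^*} = v(T)$ is not vacuously true when $T = \emptyset$, i.e., one can assume $|S(r)| \ge 2$ in the interesting case, and if $|S(r)| = 1$ then the mechanism degenerates trivially and the lemma is easy).
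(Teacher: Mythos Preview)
Your proposal is correct and follows essentially the same route as the paper: both arguments reduce everything to the inequalities $v(T)\le rB$, $v_{i^*}\le rB$, and $rB<v(S(r))$ supplied by Lemma~\ref{lem:pruning:approx}(a)--(b), together with $v(S(r))=v_{i^*}+v(T)$. The only cosmetic difference is that the paper bounds the numerator in one stroke via $rB\ge\max\{v_{i^*},v(T)\}$ and the identity $v_{i^*}+v(T)-\max\{v_{i^*},v(T)\}=\min\{v_{i^*},v(T)\}$, whereas you check the two cases of the minimum separately; your added remark about the degenerate case $|S(r)|=1$ is extra care the paper omits.
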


\begin{proof}
The first inequality is due to Lemma~\ref{lem:pruning:approx}~(b) that $v(S(r)) > r B$. Lemma~\ref{lem:pruning:approx} further implies $r B \geq v_{i^*}$ and $r B \geq v(T)$, i.e., $r B \geq \max \{v_{i^*}, ~ v(T)\}$. Now, the second inequality in Lemma~\ref{lem:random:p} follows, as $q = \frac{1}{2} \cdot \frac{v_{i^*} + v(T) - r B}{\min\{v_{i^*}, ~ v(T)\}} \leq \frac{1}{2} \cdot \frac{v_{i^*} + v(T) - \max\{v_{i^*}, ~ v(T)\}}{\min\{v_{i^*}, ~ v(T)\}} = \frac{1}{2}$. Finally, rearranging $v(T) \leq r B < v(S(r)) = v_{i^*} + v(T)$ leads to the last two inequalities.
\end{proof}

Similar to {\sc Deterministic-Mechanism} in Section~\ref{sec:det}, we also slightly abuse notations and also refer to {\sc Randomized-Mechanism} as the composition of two mechanisms: {\sc Pruning-Mechanism} with {\sc Randomized-Mechanism}.

\begin{theorem}
\label{thm:rdm}
{\sc Randomized-Mechanism} is a $2$-approximation mechanism (individually rational, budget-feasible, and universally truthful) against the {\sf Fractional Knapsack} benchmark.
\end{theorem}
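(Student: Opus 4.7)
The plan is to prove separately that {\sc Randomized-Mechanism} (i)~is individually rational, budget-feasible, and universally truthful, and (ii)~satisfies $2 \cdot \alg \ge \fopt$.

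For (i), the second stage is by construction a distribution over posted-price mechanisms, and the offered prices always sum to $B_{i^*} + \sum_{i \in T} B_i = B_{i^*} + (B - B_{i^*}) = B$; Lemma~\ref{lem:random:p} together with Lemma~\ref{lem:pruning:approx}~(a) further guarantees that every offered price lies in the valid IR range $[0, v_i/r]$. Each realization is thus a deterministic posted-price mechanism, which is trivially individually rational, budget-feasible, and truthful, so the second stage is universally truthful; Lemma~\ref{lem:append} then lifts all three properties to the composition with {\sc Pruning-Mechanism}.

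For (ii), Lemma~\ref{lem:pruning:approx}~(c) reduces the task to showing $2 \cdot \alg \ge v(S(r)) + r \cdot (B - c(S(r)))$. Setting $a \defeq B - v(T)/r$, $b \defeq v_{i^*}/r$, and $\beta_i \defeq B - c_i v(T)/v_i$ for each $i \in T$, linearity of expectation gives
\[
\alg \;=\; v_{i^*} \cdot \Pr\!\big[B_{i^*} \ge c_{i^*}\big] \;+\; \sum_{i \in T} v_i \cdot \Pr\!\big[B_{i^*} \le \beta_i\big],
\]
and by Lemma~\ref{lem:pruning:approx}~(a) we have $c_{i^*} \in [0, b]$ and $\beta_i \in [a, B]$ for every $i \in T$. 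Since $B_{i^*}$ follows the explicit mixture with point masses $q_T$ at $a$ and $q_{i^*}$ at $b$ and uniform density $q/(b-a)$ in between, both probabilities can be evaluated in closed form.

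The argument then splits on whether $v_{i^*} \le v(T)$ (Case~1: $q_{i^*} = 1/2 - q$ and $q_T = 1/2$) or $v_{i^*} > v(T)$ (Case~2, with the weights flipped). In Case~1, direct substitution yields the uniform per-item lower bounds $v_i \cdot \Pr[B_{i^*} \le \beta_i] \ge v_i - r c_i/2$ for every $i \in T$ (relying on $v(T)/v_{i^*} \ge 1$) together with $v_{i^*} \cdot \Pr[B_{i^*} \ge c_{i^*}] \ge v_{i^*} - v_{i^*} q - r c_{i^*}/2$; summing and substituting the defining identity $2 v_{i^*} q = v(S(r)) - r B$ collapses the total exactly to $\tfrac{1}{2}\bigl[v(S(r)) + r(B - c(S(r)))\bigr]$. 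Case~2 proceeds dually via $2 v(T) q = v(S(r)) - r B$, with a short sub-case split on whether $c_{i^*} \le a$ (so that $i^*$ always accepts and contributes $v_{i^*}$) or $c_{i^*} > a$ (so that $v_{i^*} \ge v(T)$ lets the $i^*$-contribution's mixed term dominate). The main technical obstacle is the clipping of acceptance probabilities at $1$---when $\beta_i > b$ or $c_{i^*} < a$, the formulaic CDF value can exceed $1$ while the true probability equals $1$---but the per-item lower bounds above are always at most $v_i$ (resp.\ $v_{i^*}$), so substituting them for the true contribution is monotone in the correct direction, and no further case distinction is needed beyond the explicit $c_{i^*}$ split in Case~2.
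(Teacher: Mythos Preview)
Your proposal is correct and follows essentially the same approach as the paper: lower-bound each item's expected contribution by a linear function of its cost, sum, and compare against $v(S(r)) + r(B - c(S(r)))$ via Lemma~\ref{lem:pruning:approx}~(c). The only difference is organizational: the paper avoids your explicit Case~1/Case~2 split by working directly with $q_{i^*}$ and $q_T$, proving the unified identity $rB = 2q_{i^*} v_{i^*} + 2q_T v(T)$ together with the symmetric per-item bounds $v_{i^*}\alloc_{i^*} \ge q_{i^*} v_{i^*} + \tfrac{1}{2}(v_{i^*} - r c_{i^*})$ and $v_i\alloc_i \ge q_T v_i + \tfrac{1}{2}(v_i - r c_i)$ for $i \in T$; specializing these to each of your two cases recovers exactly the per-item lower bounds you state, so the underlying computation is identical.
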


\begin{proof}
Since {\sc Randomized-Mechanism} is a posted-price scheme, it is individually rational. Since each random realization of the prices $(B_i)_{i \in S(r)}$ is budget-feasible, i.e., $\sum_{i \in S(r)} B_i = B$ by construction, the mechanism is also budget-feasible. Note that (1)~all random choices in {\sc Randomized-Mechanism}, i.e., the prices $(B_{i})_{i \in S(r)}$, can be made before execution of the mechanism; and (2)~for each such choice, the resulting posted-price mechanism is obviously truthful. Due to Lemma~\ref{lem:append}, all desired properties extend to the composition mechanism, hence being individually rational, budget-feasible, and universally truthful.

In the rest of the proof, we show that {\sc Randomized-Mechanism} is a $2$-approximation to $\fopt$. Let $(\alloci)_{i \in S(r)}$ denote the allocation probabilities, then the mechanism generates an expected value of $\alg = \sum_{i \in S(r)} v_i \cdot \alloci$. In order to prove the approximation guarantee, we need the following equation~\eqref{eq:rdm:identity}, inequality~\eqref{eq:rdm:i^*}, and inequality~\eqref{eq:rdm:T}, which will be proved later.
\begin{align}
    r B
    ~ = ~ & 2q_{i^*} \cdot v_{i^*} + 2q_T \cdot v(T),
    \label{eq:rdm:identity} \\
    v_{i^*} \cdot \alloc_{i^*}
    ~ \geq ~ & q_{i^*} \cdot v_{i^*} + \tfrac{1}{2} \cdot (v_{i^*} - r \cdot c_{i^*}),
    \label{eq:rdm:i^*} \\
    \vali \cdot \alloci
    ~ \geq ~ & q_T \cdot \vali + \tfrac{1}{2} \cdot (\vali - r \cdot \costi), \quad\quad \forall i \in T.
    \label{eq:rdm:T}
\end{align}
Indeed, these mathematical facts together with Lemma~\ref{lem:pruning:approx}~(c) imply that $2 \cdot \alg \ge \fopt$.
\begin{align*}
    \fopt
    \,\,\,\leq\,\,\, & v(S(r)) + r \cdot (B - c(S(r))) \\
    \,\,\overset{\eqref{eq:rdm:identity}}{=}\,\, & (v_{i^*} + v(T)) + 2 \cdot (q_{i^*} \cdot v_{i^*} + q_T \cdot v(T)) - r \cdot (c_{i^*} + c(T)) \\
    \overset{(\ref{eq:rdm:i^*},\ref{eq:rdm:T})}{\leq} & 2v_{i^*} \cdot \alloc_{i^*} + 2 \cdot \sum_{i \in T} v_i \cdot \alloci \\
    \,\,\,=\,\,\, & 2 \cdot \alg.
\end{align*}
Now, we are only left to prove equation~\eqref{eq:rdm:identity}, inequality~\eqref{eq:rdm:i^*} and inequality~\eqref{eq:rdm:T}.

\vspace{.1in}
\noindent
{\tt [Equation~\eqref{eq:rdm:identity}].} By the definitions of $q_{i^*}$ and $q_T$, in either case of {\tt Step~(2)} or {\tt Step~(3)},
\begin{align*}
    q_{i^*} \cdot v_{i^*} + q_T \cdot v(T)
    ~ = ~ & \tfrac{1}{2} \cdot (v_{i^*} + v(T)) - q \cdot \min\{v_{i^*}, ~ v(T)\} \\
    ~ = ~ & \tfrac{1}{2} \cdot (v_{i^*} + v(T)) - \tfrac{1}{2} \cdot (v(S(r)) - r B)
    && \text{(definition of $q$)} \\
    ~ = ~ & \tfrac{1}{2} \cdot r B.
\end{align*}

\noindent
{\tt [Inequality~\eqref{eq:rdm:i^*}].} It is equivalent to showing that $\Prx{B_{i^*} \geq c_{i^*}} = \alloc_{i^*} \ge q_{i^*} + \frac{v_{i^*} - r \cdot c_{i^*}}{2v_{i^*}}$.
\begin{itemize}
    \item {\tt [When $c_{i^*} \leq B - \frac{v(T)}{r}$].} Item $i^*$ always accepts price $B_{i^*}$, i.e., $\Prx{B_{i^*} \geq c_{i^*}} = 1$, which gives us the desired bound of $1\ge q_{i^*} + \frac{v_{i^*} - r \cdot c_{i^*}}{2v_{i^*}}$, because $q_{i^*} \leq \frac{1}{2}$.

    \item {\tt [When $c_{i^*} > B - \frac{v(T)}{r}$].} Due to Lemma~\ref{lem:pruning:approx}~(a), $\frac{v_{i^*}}{r} \geq c_{i^*}$. We consider the random events in ${\tt Step~(4a)}$ that $B_{i^*} = \frac{v_{i^*}}{r}$ and in ${\tt Step~(4c)}$ that $B_{i^*} \sim {\tt Uniform}[B - \frac{v(T)}{r}, ~ \frac{v_{i^*}}{r}]$. Since $\Prx{\tt Step~(4a)} = q_{i^*}$ and $\Prx{\tt Step~(4c)} = q$, putting everything together gives
\begin{align*}
    \Prx{B_{i^*} \geq c_{i^*}}
    ~ = ~ & q_{i^*} + q \cdot \frac{v_{i^*} / r - c_{i^*}}{v_{i^*} / r - (B - v(T) / r)}
    && \text{(Lemma~\ref{lem:pruning:approx}~(a): $\tfrac{v_i}{r} \ge c_i$)} \\
    ~ = ~ & q_{i^*} + \frac{1}{2} \cdot \frac{v_{i^*} - r \cdot c_{i^*}}{\min\{v_{i^*}, ~ v(T)\}}
    && \text{(definition of $q$)} \\
    ~ \geq ~ & q_{i^*} + \frac{v_{i^*} - r \cdot c_{i^*}}{2v_{i^*}}.
    && \text{(as $v_{i^*} \ge \min \{v_{i^*}, ~ v(T)\}$ and $\tfrac{v_i}{r} \ge c_i$)}
\end{align*}
\end{itemize}

\noindent
{\tt [Inequality~\eqref{eq:rdm:T}].} The argument is similar to the above. For each item $i \in T$, we claim that $\Prx{B_i \ge c_i}  = \alloci \ge q_T + \frac{v_i - r \cdot c_i}{2v_i}$.
\begin{itemize}
\item {\tt [When $c_i \leq \frac{v_i}{v(T)} \cdot (B - \frac{v_{i^*}}{r})$].} Item $i$ always accepts price $B_i$, i.e., $\Prx{B_i \geq c_i} = 1$, which gives us the desired bound of $1 \ge q_T + \frac{v_i - r \cdot c_i}{2v_i}$, in that $q_{T} \le \frac{1}{2}$.
\item {\tt [When $c_i > \frac{v_i}{v(T)} \cdot (B - \frac{v_{i^*}}{r})$].} By {\tt Step~(5)}, $B_i \geq c_i$ if and only if $B_{i^*} \leq B - v(T) \cdot \frac{c_i}{v_i}$. We consider the random events in ${\tt Step~(4b)}$ that $B_{i^*} = B - \frac{v(T)}{r}$ and in ${\tt Step~(4c)}$ that $B_{i^*} \sim {\tt Uniform}[B - \frac{v(T)}{r}, ~ \frac{v_{i^*}}{r}]$. Because $\Prx{\tt Step~(4b)} = q_{T}$ and $\Prx{\tt Step~(4c)} = q$,
\begin{align*}
    \Prx{B_i \geq c_i}
    ~ = ~ & q_T + q \cdot \frac{(B - v(T) \cdot c_i / v_i) - (B - v(T) / r)}{v_{i^*} / r - (B - v(T) / r)} \\
    ~ = ~ & q_T + q \cdot \frac{v(T)}{v(S(r)) - r B} \cdot \frac{v_i - r \cdot c_i}{v_i}
    && \text{(as $v(S(r)) = v_{i^*} + v(T)$)} \\
    ~ = ~ & q_T + \frac{1}{2} \cdot \frac{v(T)}{\min\{v_{i^*}, ~ v(T)\}} \cdot \frac{v_i - r \cdot c_i}{v_i} && \text{(definition of $q$)} \\
    ~ \ge ~ & q_T + \frac{v_i - r \cdot c_i}{2v_i}, && \text{(Lemma~\ref{lem:pruning:approx}~(a): $\tfrac{v_i}{r} \ge c_i$)}
\end{align*}
\end{itemize}
This completes the proof of Theorem~\ref{thm:rdm}.
\end{proof}

\section{Conclusion and Open Question}
\label{sec:open}
In this work, we proposed a budget-feasible randomized mechanism with the best possible approximation guarantee for an additive buyer. In addition, our deterministic mechanism still leaves some room for improvement: the best possible approximation guarantee is somewhere between $\big[\sqrt{2} + 1, 3\big]$. However, our instance from Theorem~\ref{thm:lower:dtm} clearly demonstrates that quite a different approach that is specifically tailored to the real \textsf{Knapsack} optimum (rather than the fractional relaxation solution) is needed.

The class of additive valuations is the most basic class of valuations in the research agenda for budget-feasible mechanisms. We hope that our results may lead to new mechanisms and improved analysis for broader valuation classes. Indeed, given the same factor $2$-approximation result of~\cite{BCGL17} for the promise version of the problem for a subadditive buyer, we are even so bold as to conjecture that the true approximation guarantee for a subadditive buyer is still $2$ (leaving all computational considerations aside).

Our composition approach has a lot of resemblance to the pruning ideas from the frugality literature. This demonstrates that ideas and approaches from one area of reverse auction design might be beneficial to another. We believe that there could be more interesting connections between these two complementary agendas.

Finally, our mechanisms use posted prices in the second stage. Besides the practical interest and motivation of posted-price mechanisms in the prior work, our work gives additional support to study this family of mechanisms in budget-feasible framework from a theoretical viewpoint.

\vspace{.1in}
\noindent
{\bf Acknowledgements.}
We are grateful to the anonymous reviewers for their dedication in carefully reading through this paper; they offered many invaluable comments and suggestions.

\bibliographystyle{alpha}
\bibliography{main}

\end{document}